\newcommand{\y}{{\overline{y}}}
\newcommand{\x}{{\overline{x}}}
\newcommand{\Y}{{{Y}}}
\newcommand{\X}{{{X}}}
\newcommand{\trname}[1]{$^{[\boldsymbol{#1}]}$}
\newcommand{\calS}{{\cal S}}
\newcommand{\calM}{{\cal M}}
\begin{document}
\title{Mining Precise Test Oracle Modelled by FSM}
\titlerunning{Mining Precise Test Oracle Modelled by FSM}
\author{Omer Nguena Timo }
\authorrunning{O. Nguena Timo}
\institute{Universit\'e du Qu\'ebec en Outaouais\\
Campus de Saint-J\'er\^ome, QC, Canada \\
\email{omer.nguena-timo@uqo.ca}}
\maketitle              

\begin{abstract}
Precise test oracles for  reactive systems such as critical control systems and communication protocols can be modelled with deterministic finite state machines (FSMs). Among other roles, they serve in evaluating the correctness of systems under test. 
A great number of candidate precise oracles (shortly, candidates) can be produced at the system design phase due to  uncertainties, e.g., when  interpreting their requirements expressed in ambiguous natural languages. Selecting the proper candidate becomes challenging for an expert.   We propose a test-driven approach to assist experts in this selection task.  The approach uses a non deterministic FSM to represent the candidates, includes  the partitioning of the candidates into subsets of candidates via  Boolean encodings and requires the intervention of experts to select subsets.  We  perform an empirical evaluation  of the applicability of the proposed  approach.

\keywords{Test oracle mining; finite state machine; uncertainty; distinguishing test; constraint solver}
\end{abstract}

\section{Introduction}\label{sec:intro}
Test oracles (simply called oracles) are usually used to evaluate the correctness of systems' responses to test data. In black-box testing approaches,  test data are usually generated from
machine-readable specifications which can also be used  in
automating the evaluation of responses and 
the production of verdicts on the presence of faults.  In  white-box testing approaches~\cite{Fraser:2013}, test data  serve to cover some artifacts during executions of a system and an expert which plays the role of the  oracle  evaluates the responses. Devising automated proper oracles is needed; however it is a tedious task which almost always requires the human expertise. Efforts are needed to facilitate this task~\cite{Barr2015,weyuker82} and to alleviate the intervention of experts in  recurrent test activities.  

\par Our work consider a typical conformance testing scenario~\cite{Lee96}, where an oracle is a deterministic finite state machine (DFSM).
However, uncertainty can occur in  devising oracles. E.g., it can 
be a consequence of misunderstanding or misinterpretation of 
requirements of systems often described with natural languages~\cite{fantechi1994assisting,brunello2019synthesis,fantechi2003applications}. As a result 
of the uncertainty, a set of candidate  oracles can be proposed. For example, machine learning-based  translation approaches~\cite{fantechi1994assisting,stahlberg2020neural} for reactive systems return the most likely DFSM, but the latter may be undesired due to decisions made by automated translation procedures. Instead, they could automatically return a set of candidate oracles of which the likelihood is above a certain threshold. On the other hand when a  candidate oracle  is available (e.g., it can be in the form of a Program under test), a set of its versions can be produced mutating it with operations mimicking the introduction or the correction of faults. Such a set can compactly  be represented by a non deterministic finite state machine (NFSM) thus modelling an imprecise oracle. The candidate oracles are called precise in the opposite of the imprecise oracle defining them. Devising an  oracle then consists in mining the proper candidate  from the imprecise oracle. 

\par In this paper we propose an approach to mining the proper oracle from an imprecise oracle represented with a NFSM.  An expert can answer queries related to the correctness of NFSM's responses. An answer can be either yes or no. Based on the answers,  the proper DFSM is automatically mined.
We assume that the proper oracle is not available to the expert and the expert might have limited  time resources for answering the queries. In this context, the expert cannot check the equivalence between a candidate  oracle  and the unavailable proper oracle; so, polynomial time active learning approaches inspired by $L^*$~\cite{DBLP:journals/iandc/Angluin87} are less adequate for devising the proper DFSM. In our approach, distinct responses to the same test data permit to distinguish between candidate oracles. Responses, as well as the corresponding test data,   are  automatically computed.   Our approach is iterative and applies the "divide and conquer" principle over a current  set of "good"  candidates. At each iteration step, the current candidate set is divided into a subset of "good"  candidates exhibiting "expected" responses to  test data and the complementary subset of "bad" ones.   The approach uses a Boolean encoding of the imprecise oracle; it takes advantage of the efficiency of constraint solvers to facilitate the search of good candidates.

\par The paper is organized as follows. The next section provides preliminary definitions. In Section~\ref{sec:problem}, we describe the oracle mining problem and introduce the steps of our solution to it. In Section~\ref{sec:encoding} we propose a Boolean encoding for an imprecise oracle and test-equivalent candidates; then we present the reduction of an imprecise oracle based on the selection of expected responses by experts. In section~\ref{sec:mining}, we propose a procedure for verifying the adequacy of a test data set for mining an oracle and a  mining procedure based on automatic generation of test data.  Experiments for promoting the  applicability of the approach are presented in Section~\ref{sec:experiments}. In section~\ref{sec:related-work}, we present the related work. We conclude our work in Section~\ref{sec:conclusion}. 

\section{Preliminaries}\label{sec:preliminaries}

A {\em Finite State Machine} (FSM) is a 5-tuple $\calS = (S, s^0, X, Y, T)$, where $S$ is a finite set of states with initial state $s^0$; $X$ and $Y$ are finite non-empty disjoint sets of inputs and outputs, respectively; $T \subseteq S \times \X \times \Y \times S$ is a transition relation and a tuple $(s, x, y, s') \in T$ is called a transition from $s$ to $s'$ with input $x$ and output $y$. The set of transitions from state $s$ is denoted by $T(s)$. $T(s,x)$ denotes the set of transitions in $T(s)$ with input $x$.   For a transition $t = (s, x, y, s')$, we define $src(t) = s$, $inp(t)=x$, $out(t)= y$ and $tgt(t) =s'$.
The set of uncertain transitions in an object $A$ is denoted by $Unctn(A)$.
Transition $t$ is {\em uncertain} if $|T(src(t), inp(t))| > 1$, i.e., several transitions  from the $src(t)$ have the same input as $t$; otherwise $t$ is {\em certain}. The number $U_{s,x} = |T(s, x)|$ is called the \textit{uncertainty degree} of state $s$ on input $x$. $U_\calS = max_{s\in S, x\in X} U_{s,x}$ defines the uncertainty degree of  $\calS$. We say that $\calS$ is \textit{deterministic} (DFSM) if it has no uncertain transition, otherwise it is non-deterministic (NFSM). In other words $U_\calS \leq 1$ if $\calS$ is deterministic. 
$\calS$ is {\em completely specified} (complete FSM) if for each tuple $(s, x) \in  S \times \X$ there exists transition $(s, x, y, s') \in T$.

\par An {\em execution of $\calS$ in $s$}, $e= t_1t_2\ldots t_n$ is a finite sequence  of transitions forming a path from $s$ in the state transition diagram of $\calS$, i.e., $src(t_1)=s$, $src(t_{i+1}) = tgt(t_i)$ for every $i=1...n-1$.  Execution $e$ is \textit{deterministic} if every $t_i$ is the only transition in $e$ that belongs to $T(src(t_i),inp(t_i))$, i.e., $e$  does not include several uncertain transitions from the same state with the same input. $e$ is simply called an execution of $\calS$ if $s=s^0$. $\calS$ is {\em initially connected}, if for any state $s' \in S$ there exists an execution of $\calS$ to $s'$. A DFSM has only deterministic executions, while an NFSM can have both.  A trace $\x/\y$ is a pair of an input sequence $\x$ and an output sequence $\y$, both of the same length. The trace of $e$ is $inp(t_1)inp(t_2)\ldots inp(t_n)/ out(t_1)out(t_2)\ldots out(t_n)$.
A trace of $\calS$  in $s$ is a trace of an execution of $\calS$ in $s$. Let $Tr_\calS(s)$ denote the set of all traces of $\calS$ in  $s$ and $Tr_\calS$ denote the
set of traces of $\calS$ in the initial state $s^0$. Given a sequence $\beta \in (\X\Y)^*$, the input (resp. output)
projection of $\beta$, denoted $\beta_{\downarrow X}$ (resp. $\beta_{\downarrow \Y}$), is a sequence obtained
from $\beta$ by erasing symbols in $Y$  (resp. $\X$); if $\beta$ is the trace of execution $e$, then $\beta_{\downarrow X} = inp(e)$ (resp. $\beta_{\downarrow \Y} = out(e)$) is called the input (resp. output) sequence of $e$ and we say that $out(e)$ is the \textit{response} of $\calS$ in $s$  to (the application of) input sequence $inp(e)$. $|X|$ denotes the size of set $X$.

\par Two complete FSMs are distinguished with an input sequence for which they produce different responses.  Given input sequence $\x \in \X^*$, let $out_\calS(s, \x)$ denote the set of responses which can be produced by $\calS$  when $\x$ is applied at state $s$, that is  $out_\calS(s, \x) = \{\beta_{\downarrow Y} \mid \beta\in Tr_\calS(s) \text{ and }\beta_{\downarrow X}= \x\}$.
Given state $s_1$ and $s_2$ of an FSM $\mathcal{S}$ and an input sequence $\x\in X^*$, $s_1$ and $s_2$ are $\x$-distinguishable, denoted by $s_1\not \simeq_\x s_2$ if  $out_\mathcal{S}(s_1,\x) \neq out_\mathcal{S}(s_2,\x)$; then $\x$ is called a distinguishing input sequence for $s_1$ and $s_2$. $s_1$ and $s_2$ are $\x$-equivalent, denoted by $s_1 \simeq_\x s_2$ if  $out_\mathcal{S}(s_1,\x) = out_\mathcal{S}(s_2,\x)$. $s_1$ and $s_2$ are distinguishable, denoted by $s_1\not \simeq s_2$,  if they are $\x$-distinguishable  for some input sequence $\x\in X^*$; otherwise they are equivalent. Let $a\in X$. A distinguishing input sequence $\x a \in X^+$ for $s_1$ and $s_2$ is {\em minimal} if $\x$ is not distinguishing for $s_1$ and $s_2$. 
Two complete DFSMs $\calS_1 = (S_1, s_1^0, X, Y, T_1)$ and $\calS_2 = (S_2, s_2^0, X, Y, T_2)$ over the same input and output alphabets are distinguished with input sequence $\x$ if $s_1^0\not \simeq_\x s_2^0$.

\par Henceforth, FSMs and DFSMs are complete and initially connected. 

\par Given a  NFSM $\calM  = (M, m^0, X, Y, N)$, a  FSM $\calS = (S, s^0, X, Y, T)$ is a {\em submachine} of $\calM$, denoted by $\calS \in \calM$ if $S\subseteq M$, $m^0 =s^0$
and $T\subseteq N$. 

\par We will use a  NFSM to represent a set of   \textit{candidate} DFSMs. We let $Dom(\mathcal{M})$ denote the set of candidate DFSMs included in NFSM $\mathcal{M}$. Later,  we will be interested in executions of $\mathcal{M}$ that are executions of a DFSM in $Dom(\mathcal{M})$.  Let $e$ be an execution of a NFSM $\calM$ in $m^0$. We say that  $e$ \textit{involves} a submachine $\calS = (S, s_0, X, Y, T)$ of $\calM$  if $Unctn(e) \subseteq T$, i.e., all the uncertain transitions in $e$ are defined in $\calS$. The certain transitions are defined in each DFSM in $Dom(\mathcal{M})$, but distinct DFSMs in $Dom(\mathcal{M})$ define distinct sets of  uncertain transitions.

\section{The Oracle Mining Problem and Overview of the Proposed Solution}\label{sec:problem}

Oracles play an important role in  testing and verification activities, especially
they define and evaluate the responses of implementations to given tests. The evaluation serves  to provide verdicts on the presence of faults in the implementations.
Letting experts play the role of an oracle is expensive. The experts will intervene in recurrent test campaigns  for judging an important number of responses. For these reasons, automated test oracles are preferred.

\par Devising  precise oracles (shortly oracles) is a challenging task that might require uncertainty resolution, as discussed in Section~\ref{sec:intro}. Full automation of this task might result in undesired oracles. Inspired by previous work~\cite{Chow78,MavridouL18}, we represent  oracles  with DFSMs  and a \textit{test} with an input sequence.

\par We propose a semi-automated mining approach  for devising  oracles. First we suggest modelling uncertainties with non deterministic transitions in a NFSM. This latter NFSM represents an \textit{imprecise oracle} and it defines  conflicting outputs for the same input applied in the same state.  It also defines a possibly big number of candidate  oracles (shortly \textit{candidates}) which are the DFSM included in it. Secondly, experts can take useful decisions  for the resolution of uncertainties and the automatic extraction of the proper candidate. The decisions concern the evaluation and the selection of conflicting responses. The fewer are the decisions, the less is  the intervention of experts in the mining process and the recurrent testing activities with the selected oracle.

\par Let a NFSM $\mathcal{M}  = (M, m^0, X, Y, N)$ represent an imprecise oracle. We say that $\mathcal{S} \in Dom(\mathcal{M})$ is the \textit{proper}  oracle w.r.t. experts if  $\mathcal{S}$ always produces the expected responses to every test, according to the point of view of experts; otherwise $\mathcal{S}$ is \textit{inappropriate}. Equivalent DFSMs represent an identical oracle. In practice the uncertainty degree of $\calM$ should be much smaller than its maximal value $|M||Y|$; we believe that it could be smaller than the maximum of $|M|$ and $|Y|$.
The \textit{oracle mining problem} is to select the proper  oracle in $\mathcal{M}$, with the help of an expert. We assume that $Dom(\mathcal{M})$ always contains the proper oracle.

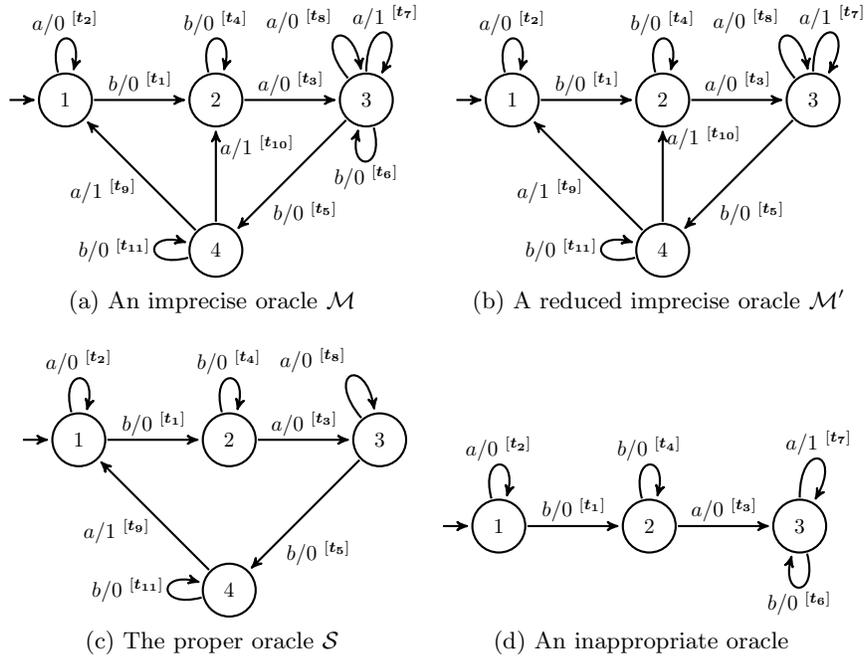
\begin{figure}[!ht]
\centering
\subfloat[An imprecise oracle $\mathcal{M}$\label{fig:imprecise}]{ 
\scalebox{0.80}{
\begin{tikzpicture}[->,>=stealth',shorten >=1pt,auto,node distance=2.5cm,initial text={},
every initial by arrow/.style={->}]
{\fontsize{10}{10}\selectfont
\tikzstyle{every node}=[scale=1,inner sep=1pt, outer sep=1pt,line width=1pt]
\tikzstyle{every edge}=[scale=1,line width=1pt, draw]
\tikzstyle{every state}=[fill=white,draw,text=black]
  \node[initial,state]  (S1)   at (0,5)                {$1$};
  \node[state]         (S2)  at ([xshift=2.5cm,yshift=0cm]S1) {$2$};
  \node[state]         (S3) at ([xshift=2.5cm,yshift=0cm]S2) {$3$};
  \node[state]         (S4) [below of=S2] {$4$};
  \path (S1) edge [loop above] node {$a/0$ \trname{t_2}} (S1)
             edge          node[above] {$b/0$ \trname{t_1}}(S2) 
        (S2) edge [loop above] node {$b/0$ \trname{t_4}} (S2)
             edge  node[above, pos=0.5] {$a/0$ \trname{t_3}} (S3)
        (S3)edge [loop, in=100, out=130,looseness=10] node {$a/0$ \trname{t_8} } (S3) 
             edge [loop, in=60, out=90,looseness=10]       node[xshift=-0.6cm] {$a/1$ \trname{t_7}} (S3)
             edge [loop below]       node {$b/0$ \trname{t_6}} (S3)
             edge node[pos=0.7] {$b/0$ \trname{t_5}} (S4)
        (S4) edge [loop left]  node {$b/0$ \trname{t_{11}}} (S4)          
             edge node {$a/1$ \trname{t_9}} (S1) 
             edge node[right, pos=0.8] {$a/1$ \trname{t_{10}}} (S2) ;
}
\end{tikzpicture}
}
} 
\subfloat[A reduced imprecise oracle $\mathcal{M}'$\label{fig:imprecise:red}]{ 
\scalebox{0.80}{
\begin{tikzpicture}[->,>=stealth',shorten >=1pt,auto,node distance=2.5cm,initial text={},
every initial by arrow/.style={->}]
{\fontsize{10}{10}\selectfont
\tikzstyle{every node}=[scale=1,inner sep=1pt, outer sep=1pt,line width=1pt]
\tikzstyle{every edge}=[scale=1,line width=1pt, draw]
\tikzstyle{every state}=[fill=white,draw,text=black]
  \node[initial,state]  (S1)   at (0,5)                {$1$};
  \node[state]         (S2)  at ([xshift=2.5cm,yshift=0cm]S1) {$2$};
  \node[state]         (S3) at ([xshift=2.5cm,yshift=0cm]S2) {$3$};
  \node[state]         (S4) [below of=S2] {$4$};
  \path (S1) edge [loop above] node {$a/0$ \trname{t_2}} (S1)
             edge          node[above] {$b/0$ \trname{t_1}}(S2) 
        (S2) edge [loop above] node {$b/0$ \trname{t_4}} (S2)
             edge  node[above, pos=0.5] {$a/0$ \trname{t_3}} (S3)
        (S3) edge [loop, in=100, out=130,looseness=10] node {$a/0$ \trname{t_8} } (S3) 
             edge [loop, in=60, out=90,looseness=10]       node[xshift=-0.6cm] {$a/1$ \trname{t_7}} (S3)
             edge node[pos=0.7] {$b/0$ \trname{t_5}} (S4)
        (S4) edge [loop left]  node {$b/0$ \trname{t_{11}}} (S4)          
             edge node {$a/1$ \trname{t_9}} (S1) 
             edge node[right, pos=0.9] {$a/1$ \trname{t_{10}}} (S2) ;
}
\end{tikzpicture}
}
}  

\subfloat[The proper  oracle $\mathcal{S}$\label{fig:precise1}]{
\scalebox{0.80}{
\begin{tikzpicture}[->,>=stealth',shorten >=1pt,auto,node distance=2.5cm,initial text={},
every initial by arrow/.style={->}]
{\fontsize{10}{10}\selectfont
\tikzstyle{every node}=[scale=1,inner sep=1pt, outer sep=1pt,line width=1pt]
\tikzstyle{every edge}=[scale=1,line width=1pt, draw]
\tikzstyle{every state}=[fill=white,draw,text=black]
  \node[initial,state]  (S1)   at (0,5)                {$1$};
  \node[state]         (S2)  at ([xshift=2.5cm,yshift=0cm]S1) {$2$};
  \node[state]         (S3) at ([xshift=2.5cm,yshift=0cm]S2) {$3$};
  \node[state]         (S4) [below of=S2] {$4$};
  \path (S1) edge [loop above] node {$a/0$ \trname{t_2}} (S1)
             edge          node[above] {$b/0$ \trname{t_1}}(S2) 
        (S2) edge [loop above] node {$b/0$ \trname{t_4}} (S2)
             edge  node[above, pos=0.5] {$a/0$ \trname{t_3}} (S3)
        (S3) edge [loop, in=100, out=130,looseness=10] node {$a/0$ \trname{t_8} } (S3)  
             edge node[pos=0.7] {$b/0$ \trname{t_5}} (S4)
        (S4) edge [loop left]  node {$b/0$ \trname{t_{11}}} (S4)          
             edge node {$a/1$ \trname{t_9}} (S1)  ;
}
\end{tikzpicture}
}
}
\subfloat[An inappropriate  oracle 
\label{fig:precise4}]{
\scalebox{0.80}{
\begin{tikzpicture}[->,>=stealth',shorten >=1pt,auto,node distance=2.5cm,initial text={},
every initial by arrow/.style={->}]
{\fontsize{10}{10}\selectfont
\tikzstyle{every node}=[scale=1,inner sep=1pt, outer sep=1pt,line width=1pt]
\tikzstyle{every edge}=[scale=1,line width=1pt, draw]
\tikzstyle{every state}=[fill=white,draw,text=black]
  \node[initial,state]  (S1)   at (0,5)                {$1$};
  \node[state]         (S2)  at ([xshift=2.5cm,yshift=0cm]S1) {$2$};
  \node[state]         (S3)  at ([xshift=2.5cm,yshift=0cm]S2) {$3$};
  \path (S1) edge [loop above] node {$a/0$ \trname{t_2}} (S1)
             edge          node[above] {$b/0$ \trname{t_1}}(S2) 
        (S2) edge [loop above] node {$b/0$ \trname{t_4}} (S2)
             edge  node[above, pos=0.5] {$a/0$ \trname{t_3}} (S3)
        (S3) edge [loop, in=60, out=90,looseness=10]       node[xshift=-0.6cm] {$a/1$ \trname{t_7}} (S3)
             edge [loop below]       node {$b/0$ \trname{t_6}} (S3);
}
\end{tikzpicture}
}
}

    \caption{An imprecise oracle and two plausible  oracles \label{fig:iut:mutmachine}}
\end{figure}

\par The NFSM in Figure~\ref{fig:imprecise} represents an imprecise oracle. It defines eight candidate oracles with six uncertain transitions, namely $t_5,t_6,t_7,t_8,t_9,t_{10}$. 
Figure~\ref{fig:precise1} and Figure~\ref{fig:precise4} present two candidates; one of them is proper.

\par Mining the proper  oracle is challenging even with the help of an expert, especially when the NFSM for an imprecise oracle defines an important number of candidates. The one-by-one enumeration of the candidates might not work because of the sheer number of candidates induced  by an imprecise oracle.  A naive approach could consist to deactivate in each state of the NFSM, the transitions producing  outputs evaluated as unexpected by the expert.
This naive approach does not work.  For example, the imprecise oracle in Figure~\ref{fig:iut:mutmachine} has four executions with input sequence $baba$, namely $t_1t_3t_5t_9$, $t_1t_3t_5t_{10}$, $t_1t_3t_6t_8$ and $t_1t_3t_6t_7$. The two plausible responses for these executions are $0000$ and $0001$. The latter is expected as it is produced  by the proper  oracle in Figure~\ref{fig:precise1}.  
\par All but one  executions produce the desired output $1$ in state 3 on the last input $a$. One could deactivate or remove the transition $t_8$ based on the fact that it produces the last undesired output in the unexpected response. In consequence the reduction of the imprecise oracle will result in an oracle not defining $t_8$. Any candidate not defining $t_8$ is not equivalent to the proper  oracle. This naive approach of selecting some transitions from  transition sequences fails in mining the proper  oracle. This is because entire sequences of transitions  used to reach  states (and so their input-output sequences) define the proper candidate.

\par Our oracle mining approach  relies on the evaluation by experts of responses (instead of isolated outputs) of the candidates to tests. The principle of the approach is iterative and quite simple. At each iteration step, first we use pair of candidates to generate tests. Next, we generate the plausible responses for generated tests. Then we let experts select expected responses. Eventually we remove from the candidate set, the ones producing unexpected responses;  this can be done by deactivating transitions in imprecise oracle and removing candidates from the set of solutions of the Boolean formulas. The iteration process continues if two remaining candidates are distinguishable.
A lot of memory can be needed to store each and every candidate, especially if a great number of them is available. To reduce the usage of the memory, we encode candidates with Boolean formulas and we use a solver to retrieve candidates from the Boolean encodings. The Boolean encoding is also useful for representing the  candidates already used to generate distinguishing tests.

\par In the next section we propose  Boolean encodings for the DFSMs including in a NFSM and the test-equivalent DFSMs. We also present how to deactivate/remove transitions in a NFSM for modelling  reduced candidate sets.

\section{Boolean Encodings}\label{sec:encoding}

Let $\mathcal{M} =(M,m_0, X, Y, T)$ be  an imprecise oracle. $Dom(\mathcal{M})$ represents a set of candidate  oracles, i.e., a set of DFSMs. We encode candidates with Boolean formulas over variables representing the transitions in $\mathcal{M}$. A solution of a formula \textit{determines} the transitions corresponding to the variables it assigns to "true". An FSM is \textit{determined} (encoded) by a formula if exactly all its transitions are determined by a solution of the formula.

\subsection{Candidates in an imprecise oracle}

Let  $\tau=\{t_1,t_2,\ldots, t_n\}$ be a set of variables, each variable corresponds to a transition in $T$. Let us define the Boolean expression $\xi_\tau$ as follows: 
$$
\xi_\tau =   \bigwedge_{k=1..n-1} (\neg t_k \vee \bigwedge_{j=k+1..n} \neg t_j)  \wedge \bigvee_{k=1..n} t_k
$$
It holds that every solution of  
$\xi_\tau$ determines exactly one variable in $\tau$. Indeed, $\xi_\tau$ assigns {\em True} if both $\bigwedge_{k=1..n-1} (\neg t_k \vee \bigwedge_{j=k+1..n} \neg t_j)$ and $\bigvee_{k=1..n} t_k$ are {\em True}.  
$\bigvee_{k=1..n} t_k$ is {\em True} whenever at least one $t_i$ is {\em True}. If  some $t_i$ is {\em True}, then every $t_j$, $i\neq j$ must be {\em False} in order for   $\bigwedge_{k=1..n-1} (\neg t_k \vee \bigwedge_{j=k+1..n} \neg t_j)$ to be {\em True}. 
So every solution of $\xi_\tau$ determines exactly one transition in $T$; this transition corresponds to the only variable in $\tau$ that the solution assigns to $True$.

\par We encode the candidates  in $Dom(\calM)$ with the formula $$\displaystyle \varphi_\mathcal{M} = \bigwedge_{(m,x)\in M\times X} \xi_{T(m,x)} 
$$
For every state $m\in M$ and every input $x\in X$, every solution of $\varphi_\mathcal{M}$ determines exactly one transition in $\mathcal{M}$, which entails that a solution of $\varphi_\mathcal{M}$ cannot determine two different transitions with the same input from the same state. So $\varphi_\mathcal{M}$ determines exactly the candidates in  $Dom(\varphi_\mathcal{M})$.

\par For the imprecise oracle  $\mathcal{M}$ in Figure~\ref{fig:imprecise}, $T(1,b) = \{t_1\}$,  $T(3, a) =\{t_7,t_8\}$, $\xi_{T(1, b)} = t_1$ and $\xi_{T(3, a)} = (\neg t_7 \vee \neg t_8) \wedge (t_7\vee t_8)$. Then, the formula $\varphi_\mathcal{M}:=$ 
$ t_1
\wedge  t_2
\wedge  t_3
\wedge  t_4
\wedge  t_{11}
\wedge  ((\neg t_7 \vee \neg t_8) \wedge (t_7\vee t_8))
\wedge  ((\neg t_5 \vee \neg t_6) \wedge (t_5\vee t_6))
\wedge  ((\neg t_9 \vee \neg t_{10}) \wedge (t_9\vee t_{10}))$ encodes all the DFSMs included in $\mathcal{M}$. In other words, $\varphi_\mathcal{M}$ determines all the candidates defined by $\mathcal{M}$. The DFSM in Figure~\ref{fig:precise1} is determined by $\varphi_{\mathcal{M}}$.

\subsection{Candidates involved in executions of an imprecise oracle}

\par An execution $e= t_1t_2\ldots t_n$ of $\mathcal{M}$ \textit{involves} a FSM $\calS \in  Dom(\mathcal{M})$ if every $t_i$ is defined in  $\calS$.  Recall that all the certain transitions are defined in every candidate. Let us define the formula $\varphi_e =  \bigwedge_{i=1..n, t_i \in Unctn(e)} t_i$. Clearly $\xi_e$ determines every uncertain transition in $e$, so it determines the deterministic and non deterministic FSMs involved in $e$. However we are interested in DFSMs in $Dom(\calM)$ only. Remark that if DFSM $\calS$ is involved in $e$, then $e$ is deterministic. Conversely, $e$ is deterministic if $Dom(\mathcal{M})$ includes  a DFSM involved in $e$. An execution of $\mathcal{M}$ must be deterministic for a DFSM to be involved in it. So $\varphi_e$ determines the DFSMs involved in $e$ if $e$ is deterministic. Let $E = \{e_1, e_2, \ldots, e_m\}$ be a set of deterministic executions of $\mathcal{M}$ and let us define the formula $\varphi_E = \bigvee_{i=1..n} \varphi_{e_i}$. The formula $\varphi_E \wedge \varphi_\mathcal{M}$ determines the DFSMs involved in an execution in $E$. 

\par Consider the NFSM in Figure~\ref{fig:imprecise} and a set $E =\{e_0= t_1t_3t_6t_8t_8t_6, e_1 = t_1t_3t_5t_9t_2, e_2 = t_1t_3t_5t_{10}t_3t_5, e_3=t_1t_3t_6t_7t_7t_6\}$ consisting of four executions $e_0,e_1,e_2$ and $e_3$.
Remark that the executions are deterministic and they have the same input sequence $babaab$ but distinct responses, namely $000000$ for $e_0$, $000100$ for $e_1$ and $e_2$ and $000110$ for $e_3$.  The formula $\varphi_E = ( t_1\wedge t_3 \wedge t_6 \wedge t_8) \vee (t_1 \wedge t_3 \wedge t_5 \wedge t_9 \wedge t_2) \vee (t_1 \wedge t_3 \wedge t_5 \wedge t_{10}) \vee (t_1 \wedge t_3 \wedge t_6 \wedge t_7) \wedge \varphi_\mathcal{M}$  encodes the DFSMs involved in the three executions. 

\subsection{Test-equivalent candidate}\label{sec:encoding:undistinguished}

\par Let $\x$ be a test. To determine the $\x$-equivalent DFSMs, we can partition $Dom(\mathcal{M})$ into subdomains. The DFSMs in each subdomain  produce the same response to test $\x$.  Our encoding of  each subdomain  with a Boolean formula works as follows. 

\par Let $\Y_{\mathcal{M},\x} =\{\y_1,\y_2,\ldots \y_n\}$ be the set of responses the DFSMs in $Dom(\mathcal{M})$  to test $\x$. Each response $\y_i$,  with $i=1...n$, corresponds a maximal set of deterministic executions of $\mathcal{M}$ with input sequence $\x$. We denote by $E_{\x/\y_i} =\{e_{i_1},e_{i_2},\ldots, e_{i_m}\}$ the set of deterministic executions producing $\y_i$ on input sequence $\x$. Clearly $E_{\x/\y_i}$ characterizes a subdomain of $\x$-equivalent DFSMs. The maximal size of $\Y_{\mathcal{M},\x}$ equals $|\x|^{|Y|}$ and it is reached when the imprecise oracle is the universe of all DFSMs, which is not the practical context of our work with imprecise oracles having reasonable uncertainty degrees.

\par Let $P_{x/\y_i}$ denote the set of DFSM in $\mathcal{M}$ involved in an execution in $E_{\x/\y_i}$. It holds that $P_{\x/\y_1}, P_{\x/\y_2}, \ldots P_{\x/\y_n}$ constitutes a partition of $Dom(\mathcal{M})$, i.e., every deterministic submachine of  $\mathcal{M}$ exactly belongs to one $P_{\x/\y_i}$, $i=1..n$ and every DFSM in  $P_{\x/\y_i}$ is a submachine of $\mathcal{M}$ for every $i=1..n$. 

\par For each $\y\in \Y_{\mathcal{M},\x}$, we define the formula $\varphi_{E_{\x/\y}}$. It holds that $\varphi_\calM \wedge \varphi_{E_{\x/\y}}$ encodes the maximal set of DFSMs indistinguishable by $\x$.  Indeed,  $\varphi_{E_{\x/\y}}$ determines exactly the $\x$-equivalent FSMs involved in deterministic executions in $E_{\x/\y}$ and $\varphi_\calM$ determines the DFSMs in $\calM$. We can show that  every DFSM included in $\mathcal{M}$ is determined by the formula $\varphi_\calM \wedge \varphi_{E_{\x/\y}}$ for exactly one $\y\in \Y_{\mathcal{M},\x}$.  Furthermore, if $\x$ is not distinguishing for the DFSMs in $Dom({\mathcal{M}})$, then $\varphi_\calM \wedge \varphi_{E_{\x/\y}}$ and $\varphi_\calM$ are equivalent, i.e., they determine  the DFSMs in  $Dom({\mathcal{M}})$.  

\par Considering our running example and the test $\x = babaab$, we have that $\Y_{\mathcal{M},babaab} = \{e_0= t_1t_3t_6t_8t_8t_6, e_1 = t_1t_3t_5t_9t_2, e_2 = t_1t_3t_5t_{10}t_3t_5, e_3=t_1t_3t_6t_7t_7t_6 \}$. Since the four executions have distinct responses (i.e., output sequences), we get $E_{babaab/000000} =\{e_0\}$, $E_{babaab/000100}=\{e_1, e_2\}$ and $E_{babaab/000110}=\{e_3\}$. %
Table~\ref{tab:subdomain:babaab} presents the corresponding subdomains and the number of  oracles in each subdomain. The two  oracles in the subdomain for response $000000$ are equivalent. The same for response $000110$. The subdomain for response $000100$ defines four $babaab$-equivalent  candidate oracles. Later, experts are invited to select the expected response that  will serve to reduce the imprecise oracle. 

\begin{table}[!t]
    \centering
    \caption{Partitioning of  $\mathcal{M}$ into Subdomains w.r.t input sequence $\x = babaab$}
    \label{tab:subdomain:babaab}
    \begin{tabular}{p{2cm}| p{4cm}|c|p{4cm}}
     \textbf{Response $\y$} & \textbf{Subdomain for $\varphi_{\cal M}$}   & \textbf{size} & \textbf{Precise oracles in the subdomain $P_{\x/\y_i}$}\\ \hline 
    $000100$  & $\varphi_{\x/000100} = ((t_5\wedge t_9) \vee (t_5\wedge t_{10}))$ & 4 & $\{t_1,t_2,t_3,t_4,t_5,t_7,t_{10},t_{11}\}$, $\{t_1,t_2,t_3,t_4,t_5,t_7,t_{9},t_{11}\}$, $\{t_1,t_2,t_3,t_4,t_5,t_8,t_{9},t_{11}\}$, $\{t_1,t_2,t_3,t_4,t_5,t_8,t_{10},t_{11}\}$ \\ \hline 
     $000110$ & $\varphi_{\x/000110} = t_6 \wedge t_7 $ & 2 & $\{t_1,t_2,t_3,t_4,t_6,t_7,t_{10},t_{11}\}$, $\{t_1,t_2,t_3,t_4,t_6,t_7,t_{9},t_{11}\}$\\ \hline 
     $000000$ & $\varphi_{\x/000000} = t_6 \wedge t_8$ & 2 & $\{t_1,t_2,t_3,t_4,t_6,t_8,t_{9},t_{11}\}$, $\{t_1,t_2,t_3,t_4,t_6,t_8,t_{10},t_{11}\}$ \\ \hline 
    \end{tabular}
    
    where, 
 \scriptsize{$\varphi_\mathcal{M}=$ 
$ t_1
\wedge  t_2
\wedge  t_3
\wedge  t_4
\wedge  t_{11}
\wedge  ((\neg t_7 \vee \neg t_8) \wedge (t_7\vee t_8))
\wedge  ((\neg t_5 \vee \neg t_6) \wedge (t_5\vee t_6))
\wedge  ((\neg t_9 \vee \neg t_{10}) \wedge (t_9\vee t_{10}))$}  
\end{table}

\subsection{Reducing an imprecise oracle}
\label{sec:encoding:reduced}

The selection of test-equivalent candidates  renders useless transitions of the imprecise oracle unused in the selected  candidates. These transitions can be deactivated for obtaining a reduced imprecise oracle.  

\par Let $\calM  = (M, m^0, X, Y, N)$ be an input complete NFSM and $\x/\y$ be a trace. $Dom(\mathcal{M})$ is partitioned into the set $Dom(\mathcal{M})_{\x/\y}$ of DFSMs producing $\y$ on $\x$ and the set of DFSMs not producing $\y$ on $\x$. We say that a transition $t\in N$ is \textit{eligible} for a candidate involved in $e$ if $e$ uses $t$  or $t'\not\in N(src(t),inp(t))$ for every $t'$ used in $e$.  

\begin{lemma}There is a submachine  $\mathcal{M}_{\x/\y}$ of $\mathcal{M}$ such that $Dom(\mathcal{M}_{\x/\y}) =  Dom(\mathcal{M})_{\x/\y}$. 
\end{lemma}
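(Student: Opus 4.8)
The plan is to construct $\calM_{\x/\y}$ explicitly by deleting from $\calM$ every transition that no candidate in $Dom(\calM)_{\x/\y}$ can use, and then to verify the two inclusions $Dom(\calM)_{\x/\y}\subseteq Dom(\calM_{\x/\y})$ and $Dom(\calM_{\x/\y})\subseteq Dom(\calM)_{\x/\y}$. First I would recast $Dom(\calM)_{\x/\y}$ using what is already available: a complete DFSM $\calS\in Dom(\calM)$ produces $\y$ on $\x$ iff its unique execution on $\x$ is some $e\in E_{\x/\y}$ and $\calS$ is involved in $e$ (that is, $Unctn(e)\subseteq T_{\calS}$). Hence $Dom(\calM)_{\x/\y}=\bigcup_{e\in E_{\x/\y}}\{\calS : Unctn(e)\subseteq T_{\calS}\}$. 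I then take $N_{\x/\y}$ to be the set of transitions that are eligible for a candidate involved in some $e\in E_{\x/\y}$ — equivalently, the union of the transition sets of all candidates in $Dom(\calM)_{\x/\y}$ — and let $\calM_{\x/\y}$ be the submachine carried by $N_{\x/\y}$, restricted to its initially connected part.

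The forward inclusion is the easy half. Every $\calS\in Dom(\calM)_{\x/\y}$ satisfies $T_{\calS}\subseteq N_{\x/\y}$ by construction, so $\calS$ is a submachine of $\calM_{\x/\y}$; since $\calS$ is a complete, initially connected DFSM, $\calS\in Dom(\calM_{\x/\y})$. Along the way I would discharge the routine well-formedness obligations, namely that $\calM_{\x/\y}$ is input complete on its reachable states (each reachable pair $(s,x)$ receives a transition because it is exercised by at least one good candidate) and initially connected, so that $Dom(\calM_{\x/\y})$ is non-degenerate.

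The reverse inclusion is where the real work — and, I expect, the main obstacle — lies. Here I must show that an arbitrary complete DFSM $\calS$ built only from eligible transitions necessarily reproduces $\y$ on $\x$, even though $\calS$ may combine transitions drawn from \emph{different} good executions. The natural line of attack is an induction on the run $t^{(1)}t^{(2)}\cdots$ of $\calS$ on $\x=x_1\cdots x_k$, maintaining the invariant that after $j$ steps $\calS$ has emitted the prefix $y_1\cdots y_j$ and sits in a state reached at step $j$ by some $e\in E_{\x/\y}$; the inductive step would need that \emph{every} eligible transition available at the currently reached pair $(s,x_{j+1})$ carries output $y_{j+1}$. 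This is the delicate point: when two good executions visit the same pair $(s,x_{j+1})$ at positions demanding different outputs, $N_{\x/\y}(s,x_{j+1})$ can contain transitions with conflicting outputs, and a recombined $\calS$ may follow one good execution into $(s,x_{j+1})$ and then take the ``wrong'' eligible transition, escaping $\y$. I therefore expect the proof to require controlling precisely how distinct executions in $E_{\x/\y}$ may share state--input pairs — for instance by exploiting that divergences among good executions occur only where they can no longer influence the remaining response — and the crux of the argument is to pin down and justify exactly the structural property of $E_{\x/\y}$ under which the invariant is preserved.
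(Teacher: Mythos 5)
Your construction is the paper's construction: you delete exactly the non-eligible transitions (equivalently, you keep the union of the transition sets of the candidates involved in some execution of $E_{\x/\y}$), and your ``forward'' inclusion $Dom(\calM)_{\x/\y}\subseteq Dom(\calM_{\x/\y})$ is precisely the second half of the paper's proof. Where you and the paper part ways is the other inclusion. The paper dispatches it in one sentence, asserting that any $\mathcal{P}\in Dom(\calM_{\x/\y})$ ``by construction defines all the transitions in a deterministic execution $e\in E_{\x/\y}$''; you decline to assert this and instead isolate the precise difficulty: a deterministic submachine assembled from eligible transitions can recombine fragments of different executions of $E_{\x/\y}$, entering a state--input pair along one good execution and leaving it by a transition that is eligible only because a \emph{different} good execution never visits that pair. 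So your proposal is incomplete at exactly the step the paper leaves unjustified.

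That step cannot be repaired, because the inclusion $Dom(\calM_{\x/\y})\subseteq Dom(\calM)_{\x/\y}$ --- and with it the lemma as stated --- fails in general; the paper's own running example in Section~\ref{sec:generation} is a witness. Take the reduced oracle $\calM'$ of Figure~\ref{fig:imprecise:red} and the pair $babaaa/000100$, for which $E_{babaaa/000100}$ consists of $t_1t_3t_5t_9t_2t_2$ and $t_1t_3t_5t_{10}t_3t_8$. The first execution never visits $(3,a)$, so $t_7$ and $t_8$ are eligible through it; the second uses $t_{10}$; hence every transition of $\calM'$ is eligible and the reduction of $\calM'$ by $babaaa/000100$ is $\calM'$ itself. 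Yet the candidate $\{t_1,t_2,t_3,t_4,t_5,t_7,t_{10},t_{11}\}\in Dom(\calM')$ responds $000101$ to $babaaa$. Moreover, no other choice of submachine can succeed here: $Dom(\calM')_{babaaa/000100}$ consists of the three candidates determined by the choices $(t_7,t_9)$, $(t_8,t_9)$, $(t_8,t_{10})$ at the two uncertain state--input pairs, and any submachine containing all three must contain $t_7,t_8,t_9,t_{10}$ and therefore also the bad candidate $(t_7,t_{10})$; the point is that $Dom(\calM')_{babaaa/000100}$ is not a product of independent per-pair choices, whereas the domain of any submachine is. The salvageable statement is the single inclusion $Dom(\calM)_{\x/\y}\subseteq Dom(\calM_{\x/\y})$ --- your easy half --- which is enough to justify using $\calM_{\x/\y}$ as a sound over-approximation that speeds up the computation of executions; exact membership in $Dom(\calM)_{\x/\y}$ has to be tracked by the formula $\varphi_\calM\wedge\varphi_{E_{\x/\y}}$, not by the reduced machine.
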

\begin{proof}
Let e be a deterministic execution $e$ in $E_{\x/\y}$.  Remark that all the transitions in $e$ are eligible for the candidates involved in $e$. Moreover $e$ is the only execution with input sequence $\x$ and response $\y$ in each of these candidates.  

\par We build $\mathcal{M}_{\x/\y}= (S, s^0, X, Y, T)$ with $T\subseteq N$ by deactivating  (deleting) non eligible transitions for candidates in $Dom(\calM_{\x/\y})$.  Formally $t\in N$ belongs to $T$ if it is eligible for a candidate involved in some deterministic execution $e \in E_{\x/\y}$. $m\in M$  belongs to $S$ if $m$ is used in a transition in $T$. 
Clearly, $\mathcal{M}_{\x/\y}$ is a complete and initially connected submachine of $\calM$; $\mathcal{M}_{\x/\y}$ is not necessarily deterministic because several executions in $E_{\x/\y}$ can use several uncertain transitions defined in the same state and with the same input; these transitions belong to $T$.  

\par First we show that $Dom(\mathcal{M}_{\x/\y}) \subseteq Dom(\mathcal{M})_{\x/\y}$ by contradiction. Assume that there is $\mathcal{P}$ in $ Dom(\mathcal{M}_{\x/\y})$ but not in $Dom(\mathcal{M})_{\x/\y}$. $\mathcal{P}$ is deterministic and by construction it defines all the transitions in a deterministic execution $e \in E_{\x/\y}$ of $\calM$. This implies the response of $\mathcal{P}$ on $\x$ is $\y$, which is a contradiction with  hypothesis  $\mathcal{P} \not \in Dom(\mathcal{M})_{\x/\y}$. 
Secondly, we show that $Dom(\mathcal{M})_{\x/\y} \subseteq Dom(\mathcal{M}_{\x/\y})$.
Let $\mathcal{P} \in Dom(\mathcal{M})_{\x/\y}$. $\mathcal{P}$ produces $\y$ on $\x$ with exactly one of its execution $e$. The transitions eligible for $\mathcal{P}$ are defined in $\mathcal{M}_{\x/\y}$. So $\mathcal{P} \in Dom(\mathcal{M}_{\x/\y})$. \qed 
\end{proof}
 
 \par Consider Table~\ref{tab:subdomain:babaab} and assume experts choose the expected response $000100$. The reduced imprecise oracle for $babaab/000100$, $\calM_{babaab/000100}$  is the imprecise oracle in Figure~\ref{fig:imprecise:red} which  was obtained  by removing transition $t_6$ from $\calM$ in  Figure~\ref{fig:imprecise}. This is because among the two transitions $t_5$ and $t_6$ from state $3$ with input $b$, the   executions in $E_{babaab/000100}$ only use  $t_5$.

\par Reducing an imprecise oracle permits to  speed up the computation of executions with given tests. Indeed, once it becomes clear that passing some transitions in the imprecise oracle leads to the production of undesired responses, one does not need to consider these transitions in determining new execution sets.

Let $\calS$ be a candidate in $Dom(\calM)$ and $\x/\y$ be a test-response pair. 
\begin{lemma}
$\mathcal{S} \in Dom(\calM_{\x/\y})$ if and only if  $\mathcal{S}$ is determined by $\varphi_\calM \wedge \varphi_{E_{\x/\y}}$. 
\end{lemma}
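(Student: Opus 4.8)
The plan is to show that \emph{both} sides of the biconditional are equivalent to the single statement that $\calS \in Dom(\calM)$ and the response of $\calS$ to $\x$ is $\y$, i.e. $out_\calS(s^0,\x)=\{\y\}$. This decouples the two encodings and reduces the claim to two independent characterizations that I then chain together. For the left-hand side, I would invoke the preceding lemma directly: it gives $Dom(\calM_{\x/\y}) = Dom(\calM)_{\x/\y}$, and by definition $Dom(\calM)_{\x/\y}$ is exactly the set of DFSMs in $Dom(\calM)$ producing $\y$ on $\x$. Hence $\calS \in Dom(\calM_{\x/\y})$ holds if and only if $\calS \in Dom(\calM)$ and $out_\calS(s^0,\x)=\{\y\}$, which (since the hypothesis already supplies $\calS \in Dom(\calM)$) is the target statement.

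Next I would analyse the right-hand side. A solution of $\varphi_\calM \wedge \varphi_{E_{\x/\y}}$ is precisely a solution of $\varphi_\calM$ that additionally satisfies $\varphi_{E_{\x/\y}}$. Since $\varphi_\calM$ determines exactly the members of $Dom(\calM)$, the assignment $\sigma_\calS$ that sets to \emph{True} precisely the transitions of $\calS$ satisfies $\varphi_\calM$ (this is where I use $\calS \in Dom(\calM)$), and it is the only candidate assignment whose \emph{True}-set equals the transition set of $\calS$. Therefore $\calS$ is determined by $\varphi_\calM \wedge \varphi_{E_{\x/\y}}$ if and only if $\sigma_\calS$ also satisfies $\varphi_{E_{\x/\y}} = \bigvee_{e \in E_{\x/\y}} \varphi_e$, that is, if and only if some $e \in E_{\x/\y}$ has all of its uncertain transitions in $\calS$. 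Because every certain transition lies in every candidate, this is the same as $\calS$ \emph{involving} some $e \in E_{\x/\y}$.

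The heart of the argument is then the equivalence, for $\calS \in Dom(\calM)$, between involving some deterministic execution $e \in E_{\x/\y}$ and producing $\y$ on $\x$. For the forward direction, if $\calS$ involves such an $e$, then since $\calS$ is a DFSM, $e$ is the unique execution of $\calS$ on input $\x$, so its response $\y$ is the response of $\calS$, giving $out_\calS(s^0,\x)=\{\y\}$. For the converse, if $\calS$ produces $\y$ on $\x$, then completeness and determinism guarantee a unique execution $e$ of $\calS$ on input $\x$, and its response is $\y$, so $e \in E_{\x/\y}$ and $\calS$ involves $e$. The main obstacle I anticipate lies exactly here: I must use that $\calS$ is a \emph{complete} and \emph{deterministic} FSM, so that it has exactly one execution, and hence exactly one response, for each input sequence; without this, the passage between \emph{involving} an $\y$-producing execution in $E_{\x/\y}$ and \emph{producing} $\y$ could break, since a nondeterministic submachine might involve an $\y$-producing execution while simultaneously admitting other responses to $\x$. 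Chaining the three characterizations — left-hand side $\Leftrightarrow$ ``produces $\y$ on $\x$'' $\Leftrightarrow$ ``involves some $e \in E_{\x/\y}$'' $\Leftrightarrow$ right-hand side — yields the claimed biconditional.
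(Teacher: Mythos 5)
Your proof is correct. Note that the paper states this lemma without any proof at all, so there is nothing to compare against; your argument supplies the missing justification. Your decomposition is the natural one: the left-hand side reduces via the preceding lemma to ``$\calS$ produces $\y$ on $\x$,'' and the right-hand side reduces, by unfolding the definition of \emph{determined} (the only assignment whose true-set is the transition set of $\calS$ is $\sigma_\calS$, which already satisfies $\varphi_\calM$), to ``$\calS$ involves some $e\in E_{\x/\y}$.'' You correctly identify the crux --- that a complete DFSM has exactly one execution, hence exactly one response, per input sequence --- which is what makes ``involves an execution in $E_{\x/\y}$'' and ``produces $\y$ on $\x$'' coincide. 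The only step you leave implicit is that in the converse direction the unique execution $e$ of $\calS$ on $\x$ is also a \emph{deterministic} execution of $\calM$ (so that it genuinely belongs to $E_{\x/\y}$, which the paper defines as a set of deterministic executions of $\calM$); this is immediate since all transitions of $e$ lie in the deterministic submachine $\calS$ and therefore no two of them share a source state and input, but it is worth stating explicitly.
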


Remark that in some circumstances $\calM_{\x/\y}$ is the same as $\calM$. This happens when the union of eligible transitions over a set of executions equals the set of transitions of $\calM$. Such a case will be presented in Section~\ref{sec:generation}. Uncertain transitions in $\calM$ but not in $\calM_{\x/\y}$ are not determined by $\varphi_\calM \wedge \varphi_{E_{\x/\y}}$ because other uncertain transitions are determined by $\varphi_{E_{\x/\y}}$ and a solution of $\varphi_\calM$ cannot determine two uncertain transitions from the same state with the same input.

\section{Mining an  Oracle} \label{sec:mining}
To mine an oracle represented with a DFSM, we apply a test set $TS$ on an imprecise oracle $\mathcal{M}$. We say that $TS$  is \textit{adequate} for mining the proper oracle from  $\mathcal{M}$ if $TS$ is distinguishing for some $\mathcal{S} \in \mathcal{M}$ and every other candidate in $\mathcal{M}$ that is not equivalent to $\mathcal{S}$; moreover $\mathcal{S}$  is proper. Verifying the mining adequacy of $TS$ is the first step in mining the proper oracle. In case $TS$ is not adequate, new tests can be generated.

\subsection{Verifying adequacy of a test set for mining the proper oracle}\label{sec:verifying}
\begin{algorithm}[t]
   \SetKwInOut{KwInOut}{Input-Output}
    \SetKwProg{Fn}{Procedure}{:}{}
    \SetKwFunction{Verify}{\textit{verify\_test\_adequacy\_for\_mining}}
    \KwInOut{ $\mathcal{M}$ an imprecise oracle }
    \KwIn{ $\varphi_{\mathcal{M}}$ the boolean encoding of DFSM included in NFSM  $\mathcal{M}$ }
    \KwIn{a test set $\mathcal{TS}$}
    \KwIn{a DFSM  $\mathcal{S}$ emulating the expert for the response selection}
    \KwOut{$verdict$, is $true$ or $false$ on whether $\mathcal{TS}$ enables mining a DFSM.}
    \KwOut{$\varphi$ the Boolean encoding of DFSM consistent with expert knowledge}
    \KwOut{$\x_d$ a test that distinguish two DFSM}
        
    \Fn{\Verify{$\mathcal{M},\varphi_{\mathcal{M}}, \mathcal{TS}, \mathcal{S}$}}{
     
     Set $\varphi = \varphi_{\mathcal{M}}$
     
     Set $verdict = true$ if $\varphi$ does not select at least two non equivalent DFSMs; otherwise set $verdict =false$
     
     \While{$\mathcal{TS}\neq \emptyset$  and $verdict==false$}{
       Let $\x$ be a test in  $\mathcal{TS}$.
       
       Remove $\x$ from $\mathcal{TS}$.
       
       Determine $Y_{\mathcal{M},\x}$ the set of outputs of deterministic executions in $E_\x$ of $\mathcal{M}$ with input  $\x$ 
       
       Show $Y_{\mathcal{M},\x}$ to experts and let $\y \in Y_{\mathcal{M},\x}$ be the output such that $\y =out_{ \mathcal{S}}(s^0,\x)$, ($\to$ choice of the expected response by experts)

      Determine $E_{\x/\y}\subseteq E_\x$ the deterministic executions of  $\mathcal{M}$ which produce $\y$ on test $\x$ 
       
      Determine $\calM_{\x/\y}$

     Set $\varphi = \varphi \wedge \varphi_{E_{\x/\y}}  $ the Boolean encoding of DFSMs in $\mathcal{M}$ which produce $\y$ on test $\x$ 
      
     Set $\calM = \calM_{\x/\y}$

     \eIf{$\varphi$  encodes at  two non equivalent DFSMs}{
          Set $\x_d$ to a minimal distinguishing test for two non equivalent DFSMs
          }
          {
                     
               Set $verdict = true$
     }

     }
       
        \Return $(verdict, \mathcal{M}, \varphi, \x_d)$
    }
\caption{Verifying Test Adequacy For Mining an  Oracle.\label{algo:verifying}}
\end{algorithm}

\par Our method of verifying the adequacy of a test is iterative. At each iteration step, a test is randomly chosen and the corresponding plausible responses are computed with the imprecise oracle. Then experts select an expected response and send it to an automated procedure. The automated procedure reduces the imprecise oracle, i.e., deactivates some transitions from the imprecise oracle. The procedure stops when the responses for every test are examined or no imprecision remains. The procedure \textit{verify\_test\_adequacy\_for\_mining} scripted in Algorithm~\ref{algo:verifying} returns a verdict of the verification. 

\par Procedure \textit{verify\_test\_adequacy\_for\_mining}  takes as inputs an imprecise oracle represented by a NFSM, a test set and the expert knowledge about the expected outputs for the tests. We represent the expert knowledge with a DFSM.  It uses Boolean encoding presented in the previous section. The procedure ends the iteration if all the tests were visited or the Boolean encoding defines a single DFSM. If the Boolean encoding of the test-equivalent DFSMs  defines two non equivalent DFSMs then the tests do not enable mining an  oracle; otherwise one of the remaining equivalent  DFSMs is mined. The procedure also returns the Boolean encoding of the selected DFSMs for the tests, i.e, the DFSMs which produce the expected output on every test.

\par Consider the original imprecise oracle $\calM$ in Figure~\ref{fig:imprecise}. For verifying whether  the test $babaab$ is adequate for mining an  oracle, \textit{verify\_test\_adequacy\_for\_mining} determines the plausible responses (see  Table~\ref{tab:subdomain:babaab}) for the deterministic execution  
$\calM$ on $babaab$.  Assume that experts choose expected response $000100$. The procedure determines $E_{babaab/000100}$ as we discussed in Section~\ref{sec:encoding:undistinguished}; then it builds $\varphi_{babaab/000100}$ in Table~\ref{tab:subdomain:babaab} and the reduced imprecise oracle in Figure~\ref{fig:imprecise:red} as  discussed in Section~\ref{sec:encoding:reduced}.
The  formula $\varphi:= \varphi_{\calM} \wedge \varphi_{babaab/000100}$
determines four $babaab$-equivalent candidates presented in Table~\ref{tab:subdomain:babaab}.  Two of these candidates are distinguished with test $babaaa$, namely the  oracle in Figure~\ref{fig:precise1} and the one defining the transition set $\{t_1, t_2, t_3, t_4, t_5, t_7, t_{10}, t_{11}\}$. This latter  oracle provides response $000101$  whereas the former provides $000100$ for test $babaaa$.
In conclusion the procedure returns $verdict =false$ indicating that test $babaab$ is not adequate for mining the proper  oracle in Figure~\ref{fig:precise1}; it also returns the reduced imprecise oracle and the encoding with  $\varphi$ of $babaab$-equivalent candidates.

\subsection{Test generation in mining an  oracle}\label{sec:generation}

\begin{algorithm}[t]
 \SetKwProg{Fn}{Procedure}{:}{}
    \SetKwFunction{mining}{\textit{precise\_oracle\_mining}}
    \KwIn{$\varphi_{\mathcal{M}}$ the boolean encoding of DFSM included in a NFSM  $\mathcal{M}$}
    \KwIn{a test set $\mathcal{TS}$}
    \KwIn{a DFSM  $\mathcal{S}$ emulating the expert for the response selection}
    \KwOut{$\mathcal{TS}_m$ a test set that enables mining a DFSM.}
    \KwOut{$\mathcal{P}$ the proper oracle}    
    
    \Fn{\mining{$\mathcal{M}, \mathcal{TS}, \mathcal{S}$}}{
    
    Set $\varphi = \varphi_{\mathcal{M}}$
    
    Set $\mathcal{TS}_m = \mathcal{TS}$
    
    $(verdict, \mathcal{M}', \varphi', \x_d) = \textit{verify\_test\_adequacy\_for\_mining}(\mathcal{M},\varphi, \mathcal{TS}, \mathcal{S})$
    
    \While{$verdict==false$}{
    
     Set $\mathcal{TS}_m = \mathcal{TS}_m \cup \{\x_d\}$
     
    $\varphi = \varphi'$
    
    $\mathcal{M} = \mathcal{M}'$
    
    Set $\mathcal{TS} = \{\x_d\}$
    
   $(verdict, \mathcal{M}', \varphi', \x_d) = \textit{verify\_test\_adequacy\_for\_mining}(\mathcal{M},\varphi, \mathcal{TS}, \mathcal{S})$
    }
    
    Let $\mathcal{P}$ be the DFSM obtained from a solution of $\varphi'$
    
    \Return $(\mathcal{TS}_m, \mathcal{P})$
    
    }
   
\caption{Mining an Oracle by Test Generation.\label{algo:mining}}
\end{algorithm}

\par Procedure \textit{precise\_oracle\_mining} in Algorithm~\ref{algo:mining}  mines an  oracle from an imprecise one by generating tests.  The procedure makes a call to semi-automated procedure \textit{verify\_test\_adequacy\_for\_mining} in Algorithm~\ref{algo:verifying}. If given tests are not adequate for the mining task, procedure  \textit{verify\_test\_adequacy\_for\_mining} returns a Boolean encoding of a reduced set of test-equivalent candidates. Then, procedure \textit{precise\_oracle\_mining} generates a distinguishing test for two candidates in the reduced set.
Such a test can correspond to  a path to a sink state in the distinguishing product~\cite{PetrenkoNR16} of two  candidates. 
The test generation stops if the  generated test is adequate for mining the proper  oracle in the reduced set of candidates;  otherwise another test is generated. 
Procedure \textit{precise\_oracle\_mining} always terminates because at each iteration step, the set of  candidates is reduced after a call to procedure \textit{verify\_test\_adequacy\_for\_mining} and the number of DFSMs included in the original imprecise oracle is finite. On termination of \textit{verify\_test\_adequacy\_for\_mining}, the initial tests augmented with the generated ones constitute adequate tests for mining the proper oracle  determined by $\varphi'$.   

\par Considering the running example, the first call to  \textit{verify\_test\_adequacy\_for\_mining} in the execution of Procedure \textit{precise\_oracle\_mining} permits establishing that the test $babaab$ is not adequate for mining an  oracle. This was discussed at the end of the previous section where the test $\x_d = babaaa$ was generated as a distinguishing test for two candidates determined by  $\varphi':= \varphi_{\calM} \wedge \varphi_{babaab/000100}$ and included in the reduced imprecise oracle $\calM'$ in Figure~\ref{fig:imprecise:red}. In the first iteration step of the while loop, Procedure  \textit{precise\_oracle\_mining} makes a second call to \textit{verify\_test\_adequacy\_for\_mining} for checking whether the generated test $babaaa$ is adequate for mining an oracle from the new context $\calM = \calM'$ and $\varphi = \varphi'$. Here is what happens within this second call. The plausible responses for $babaaa$ belong to $Y_{\calM', babaaa} =\{000100,000101\}$; they are obtained with  deterministic executions of $\calM'$ in $E_{babaaa} =\{e_0= t_1t_3t_5t_9t_2t_2, e_1= t_1t_3t_5t_{10}t_3t_8, e_2= t_1t_3t_5t_{10}t_3t_7 \}$. Computing executions having input sequence $babaaa$ and the plausible responses is more efficient with $\calM'$ than with $\calM$; this is because $\calM'$  does not define $t_6$. Assume that $000100$ is the expected response for $babaaa$. Then $E_{babaaa/000100} =\{e_0= e_1=t_1t_3t_5t_9t_2t_2, e_2= t_1t_3t_5t_{10}t_3t_8\}$ and  $\varphi_{babaaa/000100} = t_9 \vee (t_{10} \wedge t_8)$.  Using $\calM'$ in Figure~\ref{fig:imprecise:red}, there are two candidates involved in $e_0$ and the eligible transitions for the two candidates  include all the transitions in $\calM'$ but $t_{10}$. Remark that uncertain transitions $t_8$, $t_7$ are eligible even if they are not used in $e_0$. There is one candidate involved in $e_1$ and the eligible uncertain transitions for this candidate are  $t_8$, $t_{10}$. So, the set of eligible transitions for the candidates involved in executions in $E_{babaaa/000100}$ are all the transitions in $\calM'$. In this particular case, $\calM'$ is not reduced with test-response pair $babaaa/000100$. However the $\{babaab,babaaa\}$-equivalent candidates  are encoded with $\varphi' \wedge \varphi_{E_{babaaa/000100}} = \varphi_\calM \wedge \varphi_{E_{babaab/000100}} \wedge\varphi_{E_{babaaa/000100}}$. This latter formula determines two candidates distinguishable with $babaaba$ in the reduced imprecise oracle obtained from $\calM'$ by deactivating transition $t_{10}$. Eventually  \textit{precise\_oracle\_mining} generates the test $baa$, terminates and returns adequate  test set $\{babaab,babaaa, babaaba, baa\}$ for mining the  oracle in Figure~\ref{fig:precise1}.

\section{Experimental Results }\label{sec:experiments}

We evaluate whether  the proposed approach is applicable for mining oracles from  imprecise oracles that define a big number of candidate  oracles and whether it requires a reasonable number of interventions of  experts. For that purpose we implemented a prototype tool, perform  multiple atomic experiments,  monitor  metrics and we compute some statistics.  
The prototype tool is implemented in Java; it uses Java libraries of the solver Z3 version 4.8.4 and the compilation tool ANTLR  version 4.7.2. The  computer has the following settings: WINDOWS 10, 16 Go (RAM), Intel(R) Core i7-3770 @ 3.4 GHz.  

\par An atomic experiment works as follows. We automatically generate a complete DFSM ${\cal S}$ for given numbers of  states, inputs and outputs denoted by $|M|, |X|$ and $|Y|$ respectively. ${\cal S}$ emulates the experts during the experiments. We  set  the uncertainty degree $U$. For a  value of $U$ we randomly add transitions to ${\cal S}$ for generating an imprecise oracle ${\cal M}$. Eventually, we extract a DFSM equivalent to ${\cal S}$ from ${\cal M}$ by making a call to our implementation of  procedure \textit{precise\_oracle\_mining} in Algorithm~\ref{algo:mining}. 

\par The metrics we monitor in each atomic experiments are:  $|Dom({\cal M})|$ the maximum number of candidate  oracles in $\mathcal{M}$; $|TS|_{min}$ and $|TS|_{max}$ the minimum and the maximum numbers of generated tests;  $L_{min}$ and $L_{max}$ the minimum and the maximum lengths of the generated tests; and $T_{min}$, $T_{max}$ and $T_{med}$ the minimal, maximal and median processing times (in milliseconds) for the mining procedure. We assumed that it takes almost zero millisecond for  emulated experts to select responses, which is insignificant in comparison to the processing time for the plausible responses and  solutions of Boolean formulas.
We performed 30 atomic experiments to obtain the data in each row of Table~\ref{tab:result:scalability0} and
Table~\ref{tab:result:scalability1}.

\par  In Table~\ref{tab:result:scalability0}, we consider imprecise oracles with $10$ states, $3$ inputs and $2$ outputs. We observe that the values of almost all the metrics augment when the uncertainty degree $U$ increases, especially   $T_{med}$. 
The generated imprecise oracles in Table ~\ref{tab:result:scalability1} have  $3$ inputs, $2$ outputs and uncertainty degree equals to $3$. We also observe that almost all the metrics increase when the number of states increases, especially   $T_{med}$. We notice that for $(|M|, |X|, |Y|, U) = (10,3,2,3)$, the gap between the values for $T_{med}$ in Table ~\ref{tab:result:scalability0} and Table ~\ref{tab:result:scalability1} is minor, which let us believe that $T_{med}$ is significant to evaluate the performance of our  approach.

\par Let us provide a practical perspective on the results in Table~\ref{tab:result:scalability0} and Table~\ref{tab:result:scalability1}. Clearly, experts would have took more time than its emulation with a DFSM  to select expected responses. Let us assume that it takes on average 1 minute to experts for selecting the expected response for a test. Under this assumption and considering the last row of Table ~\ref{tab:result:scalability0}, the extraction of an  oracle over the possible $2.21E23$ candidates could last $106$ minutes since the automated procedure only lasts for $18.26$ seconds. We advocate that if the extracted  oracle serve in testing a critical system, taking $106$ minutes to extract the proper  oracle is better than using an undesired  oracle. If the manual repair of the undesired oracle is not trivial, mutation operations (taking inspiration from ~\cite{Weimer:2009,LeGoues:2012}) can apply to it for generating an imprecise oracle and mining a proper oracle.  

\begin{table}[!t]
 \centering
\caption{$(|M|,|X|,|Y|) = (10,3,2)$} \label{tab:result:scalability0}
\begin{tabular}{|c|c|c|c|c|c|c|c|c|}
$U$ & $|Dom(\mathcal{M})|$ & $|TS|_{min}$ & $|TS|_{max}$ & $L_{min}$ & $L_{max}$ & $T_{min} (ms)$ & $T_{max}(ms)$ & $T_{med}(ms)$ \\\hline \hline
2 &	1.07E9 & 21	& 32 &	5 &	8 &	871 &	1619 &	1106.0 \\
3 &	2.06E14	&33	& 55	& 5 &	8 &	2128	& 115867	& 2865.0 \\
4 &	1.15E18	&	40	& 78	& 5 &	7 &	3313 &	8626 &	4417.0 \\
5 &	9.31E20 &	55 &	100	& 5	& 7 &	6334 &	35190 &	9618.0 \\
6 &  2.21E23 &	64	& 106 &	5 &	7 &	9903 &	105994 &	18263.0 \\
\end{tabular}
\end{table}

 \begin{table}[!t]
 \centering
\caption{$(|X|,|Y|, U) = (3,2,3)$} \label{tab:result:scalability1}
\begin{tabular}{|c|c|c|c|c|c|c|c|c|}
$|M|$ & $|Dom(\mathcal{M})|$ & $|TS|_{min}$ & $|TS|_{max}$ & $L_{min}$ & $L_{max}$ & $T_{min}(ms)$ & $T_{max}(ms)$ & $T_{med}(ms)$ \\\hline \hline
7	&	1.05E10	&	22 &	43 &	4 &	7&	1008&	2457&	1220.0 \\
8 & 2.82E11	&24	& 53 &	4 &	8 &	1136 &	3199 &	2071.0 \\
9 & 7.63E12	 & 30 &	55 &	5 &	7 &	1575 &	4767 &	2056.0 \\
10 & 2.06E14 &	33 &	53 &	5 &	7 &	1905 &	4237 &	2438.0 \\
11 & 5.56E15 &	37 &	66 &	5 &	7 &	2109 &	4567 &	3053.0 \\
12 & 1.50E17 &	41 &	71 &	5 &	8 &	2533 &	5588 &	5140.0 \\
13 & 4.053E18 &	43 &	79 &	5 &	8 &	2837 &	7680 &	6381.0\\
\end{tabular}
\end{table}

The proposed approach could also be lifted for the generation in a distributed way of  adequate test sets for mining each and every candidate. This can be done by partitioning the candidate set into subsets, one subset per plausible response.  The constraints for each subset  can be processed in parallel in other to generate new tests. The generated test sets  will be computed without any intervention of experts. After the test set generation and the iterative partitioning of candidate subsets, the experts could passively  select expected responses for the generated tests in a passive manner for  mining the proper  oracle.

\section{Related Work}\label{sec:related-work}
Metamorphic testing~\cite{chen2020metamorphic,Segura2016,SahaK20} applies in devising test oracle when it is difficult to compare an expected response of a system under test with an observed one.  It consists in mutating  original test input data  to build a test set that violates metamorphic relations. These relations can play the role of coarse specifications and can serve to derive test sets. Building the relations requires the expert knowledge and extra-skills. Our approach exonerates testers to building such relations. Candidate oracles  allow focusing on revealing  deviations in the  responses.
\par In~\cite{Weimer:2009,LeGoues:2012} a test-response set is used to repair a system when its formal specification is unavailable.  The approach consists in analyzing  mutated versions of an implementation (C program) until one is found that  retains required functionality and  avoids a defect located by the tests. Mutated versions are generated using genetic programming. In our work, the specification and the test-response pairs are unavailable. We  generate tests and we rely on experts and the imprecise oracle to obtain the expected responses and to extract  the oracle (specification).

In~\cite{Hieron2004}, a test set is generated to detect whether a DFSM implementation is a reduction (i.e., is trace included) of a NFSM specification playing the role of an oracle; if so the implementation conforms to the specification. This work presumes that any of the traces of the specification is expected. This differs from our settings where responses from non deterministic executions in the imprecise oracle NFSM cannot be produced by the proper candidate DFSM;  so any implementation exhibiting these responses must fail the tests.   

The work in~\cite{DBLP:journals/iandc/Angluin87} addresses the problem of learning a  DFSM by using output and equivalence queries to a teacher. The proposed polynomial time active learning algorithm often requires a certain number of queries so that it wont be effective for experts to play the role of the teacher. In practice, the teacher is a black-boxed implementation one wants to infer a DFSM model. In our work, we want to mine a DFSM from a given NFSM by using the expert knowledge.  Such a situation happens, e.g., when one needs to choose among multiple implementation models of  the same system. In our settings, there is no  equivalence query and expert responds few queries on the selection of expected responses. 

The work in~\cite{PetrenkoNR16,Nguena-TimoPR18} represents the fault domain for a DFSM specification with a NFSM. Each DFSM in the domain represents a version of the specification seeded with faults. The work addresses the problem of generating a test set~\cite{PetrenkoNR16} or a single test~\cite{Nguena-TimoPR18} for distinguishing a the specification from the other DFSMs. In this paper  we address a different concern, which is selecting a yet unknown oracle (specification) from a  set of candidate oracles.

In~\cite{TimoPR19}, experts play the role of an ultimate oracle to select one precise oracle from an imprecise oracle. The experts are requested to evaluate pairs of responses produced from too many pairs of  candidate oracles. In the current work, candidate oracles having produced unexpected responses are neither analysed, nor compared to the others.  The mining approach developed in this paper is clearly more efficient than the one in~\cite{TimoPR19}.

\section{Concluding remarks}\label{sec:conclusion}
We have presented an approach to  mining a precise  oracle from an imprecise one defining a set of candidate  oracles. Precise  oracles are represented with DFSMs whereas NFSMs represent imprecise oracles. We compactly encoded candidate precise oracles with Boolean formulas. We presented a method of reducing the imprecise oracle for efficient computation of plausible response sets. The proposed approach takes advantage of the efficiency of existing solvers and the reduction of the imprecise oracle for efficient search of distinguishable  precise oracles, test generation. It requests experts to select one correct response per test.  The experimental results have demonstrated that few tests and few response sets are needed for mining the proper  precise oracle from many candidate precise oracles. This indicates that the number of experts' interventions is reasonable and the approach is applicable. 

\par We plan to lift the proposed approach for mining extended finite state machines which are also used to represent test oracles. We also plan investigating automatic construction of imprecise oracles from system requirements, e.g., by modifying machine learning-based translation procedures or investigating mutation operators to be applied on generated "incorrect" oracles.

\section*{Acknowledgement}
This work was partially supported by MEI (Minist\`ere de l'\'Economie et Innovation) of Gouvernement du Qu\'ebec. 
The author would like to thank Dr. Alexandre Petrenko and anonymous reviewers for their valuable comments.

\bibliographystyle{splncs04}
\bibliography{reference}
\end{document}